\documentclass[aps,prl,preprintnumbers,twocolumn,groupedaddress,nofootinbib]{revtex4}

\usepackage{latexsym}
\usepackage{amsmath}
\usepackage{amsfonts}
\usepackage{amsthm}
\usepackage{graphicx}

\usepackage{color}

\usepackage[latin1,utf8]{inputenc}
\usepackage[T1]{fontenc}

\allowdisplaybreaks

\newcommand{\bq}{\begin{eqnarray}}
\newcommand{\eq}{\end{eqnarray}}
\newcommand{\eps}{\varepsilon}

\theoremstyle{plain}

\newtheorem{theoremcounter}{}[]
\newtheorem{theorem}[theoremcounter]{Theorem}

\newcommand{\arxivdate}{February 6, 2019}

\begin{document}

\preprint{MITP/19-009}
\title{\boldmath{Causality and loop-tree duality at higher loops}}

\author{Robert Runkel, Zolt\'an Sz\H{o}r, Juan Pablo Vesga and Stefan Weinzierl}
\affiliation{PRISMA Cluster of Excellence, Institut f{\"u}r Physik, Johannes Gutenberg-Universit\"at Mainz, D-55099 Mainz, Germany}

\date{\arxivdate}

\begin{abstract}
We relate a $l$-loop Feynman integral to a sum of phase space integrals, where the integrands
are determined by the spanning trees of the original $l$-loop graph.
Causality requires that the propagators of the trees have a modified $i\delta$-prescription and 
we present a simple formula for the correct $i\delta$-prescription.
\end{abstract}

\maketitle

\section{Introduction}
\label{sec:intro}

Relating loop integrals to trees goes back to Feynman \cite{Feynman:1963ax}.
The Feynman tree theorem allows us to relate a $l$-loop Feynman integral with $N$ internal propagators
to a $l$-fold phase space integral with a number of cuts $N_{\mathrm{cut}}$ 
on the original integrand, with $N_{\mathrm{cut}}$ ranging from $l$ to $N$.
One could argue that a better name for this theorem would be the Feynman forest theorem, 
as in general (i.e. for $N_{\mathrm{cut}}>l$) the integrand
corresponds to a set of trees, i.e. a forest.
This is not very convenient: If more than $l$ cuts are present, each additional cut 
imposes a non-trivial constraint on the phase space integration.

What we would like to have is a formula which relates a $l$-loop Feynman integral to a $l$-fold phase space integral
without any additional constraints.
The integrand of the phase space integral then corresponds to a tree, not a forest, and is obtained from the original
integrand by exactly $l$ cuts.
For one-loop integrals this was achieved in \cite{Catani:2008xa}.
An important result of this paper was the statement that the uncut propagators have a modified
$i \delta$-prescription. 
The usual $i \delta$-prescription for a Feynman propagator is
\bq
\label{Feynman_propagator}
 \frac{i}{k_j^2-m_j^2+i\delta},
\eq
where $\delta>0$ is an infinitesimal small quantity.
The modified $i \delta$-prescription is a consequence of causality.
We call propagators with a modified $i \delta$-prescription {\it dual propagators}.

In this letter we present the generalisation to an arbitrary loop number $l$.
A comment is in order: A generalisation of loop-tree duality to two loops and beyond has already
been considered in \cite{Bierenbaum:2010cy}.
However, the final formulae presented there are not particular elegant and involve a mixture of Feynman propagators
and one-loop dual propagators.
Our result is more aesthetic: All uncut propagators are dual propagators, with a simple dual $i \delta$-prescription,
which reduces in the one-loop case to the one of Catani et al. \cite{Catani:2008xa}.
A dual propagator is of the form
\bq
 \frac{i}{k_j^2-m_j^2+i s_j\left(\sigma\right) \delta}.
\eq
Only the sign of the function $s_j(\sigma)$ is relevant. The function $s_j(\sigma)$ depends on the energy $E_j$
and the energies of the cut propagators $E_{\sigma_1}$, ..., $E_{\sigma_l}$ and will be given in eq.~(\ref{dual_delta_I_prescription}) 
or alternatively in eq.~(\ref{dual_delta_I_prescription_covariant}) below.

Let us also briefly comment on the difference of the loop-tree duality approach 
with the so-called $Q$-cut approach \cite{Baadsgaard:2015twa}:
The latter involves propagators linear in the loop momenta, 
where the information due to the infinitesimal imaginary part is lost if all propagators have the same small imaginary part as in eq.~(\ref{Feynman_propagator}).
The $i\delta$-prescription is restored by introducing different infinitesimal small imaginary parts for the
internal propagators and averaging over all possible relative orderings.
In our approach, only propagators quadratic in the loop momenta occur.
Furthermore, the modified $i \delta$-prescription of the dual propagators follows directly from
$i \delta$-prescription of the Feynman propagators.
Our approach applies to massless and massive particles.

\section{Notation}
\label{sec:notation}

Let $\Gamma$ be a Feynman graph with $l$ loops, $n$ external lines and $N$ internal edges.
We denote by $E_\Gamma=\{e_1,...,e_N\}$ the set of internal edges.
A spanning tree for the graph $\Gamma$ is a sub-graph $T$ of $\Gamma$, 
which contains all the vertices of $\Gamma$ and is a connected tree graph \cite{Bogner:2010kv}.
If $T$ is a spanning tree for $\Gamma$, 
then it can be obtained from $\Gamma$ by deleting $l$ internal edges, say $\{e_{\sigma_1},...,e_{\sigma_l}\}$.
We denote by $\sigma=\{\sigma_1,...,\sigma_l\}$ the set of indices of the deleted edges
and by ${\mathcal C}_\Gamma$ the set of all such sets.
Thus $|{\mathcal C}_\Gamma|$ gives the number of spanning trees for the graph $\Gamma$.

Each $\sigma \in {\mathcal C}_\Gamma$ defines also a cut graph $T_{\mathrm{cut}}$, obtained by cutting
each of the $l$ internal edges $e_{\sigma_j}$ into two half-edges. The $2l$ half-edges become external lines of $T_{\mathrm{cut}}$.
The graph $T_{\mathrm{cut}}$ is a tree graph with $n+2l$ external lines.

We denote the external momenta of the graph $\Gamma$ by $p_1$, ..., $p_n$ and the internal momenta by $k_1$, ..., $k_N$.
We will assume that the internal momenta have been labelled such that the first $l$ internal momenta $k_1$, ..., $k_l$
form a basis of independent loop momenta.
For each internal edge we set
\bq
 D_j & = & k_j^2 - m_j^2 + i \delta, 
 \;\;\;\;\;\;
 e_j \in E_\Gamma.
\eq
We will assume that $D_i \neq D_j$  for $i \neq j$.
Otherwise we consider a reduced graph $\Gamma'$ with edge $e_j$ contracted and a higher power of the propagator
associated to edge $e_i$.

For a function $f$ depending on a $D$-dimensional momentum variable $k=(E,\vec{k})$,
where the vector $\vec{k}$ is $(D-1)$-dimensional,
we either write 
$f(k)$ or $f(E,\vec{k})$.
We would like to integrate the function $f$ over the hyperboloid $k^2=m^2$.
The quantities
\bq
 \int \frac{d^{D-1}k}{\left(2\pi\right)^{D-1} 2 \sqrt{\vec{k}^2+m^2}}
 \; 
 f\left(\pm\sqrt{\vec{k}^2+m^2},\vec{k}\right)
\eq
give the integrals over the forward hyperboloid and the backward hyperboloid, respectively.
As a short hand notation we set
\bq
\lefteqn{
 \int \frac{d^{D-1}k}{\left(2\pi\right)^{D-1} 2 \sqrt{\vec{k}^2+m^2}} 
 \;
 f\left(k\right)
 = 
 \int \frac{d^{D-1}k}{\left(2\pi\right)^{D-1} 2 \sqrt{\vec{k}^2+m^2}} 
 } & &
 \nonumber \\
 & &
 \hspace*{6mm}
 \times
 \left[ 
 f\left(\sqrt{\vec{k}^2+m^2},\vec{k}\right)
 +
 f\left(-\sqrt{\vec{k}^2+m^2},\vec{k}\right)
 \right]
 \hspace*{4mm}
\eq
for the integral over the forward and the backward hyperboloid.

\section{Loop-tree duality}
\label{sec:duality}

Let $P_\Gamma$ be a polynomial in the loop momenta.
We consider
\bq
 I & = &
 \int \left( \prod\limits_{j=1}^l \frac{d^Dk_j}{\left(2\pi\right)^D} \right)
 \frac{P_\Gamma}{\prod\limits_{e_j \in E_\Gamma} D_j^{\nu_j}}.
\eq
We split each loop integration into an integration over the energy and the spatial components of the loop momentum:
\bq
 I 
 & = &
 \int \left( \prod\limits_{j=1}^l \frac{d^{D-1}k_j}{\left(2\pi\right)^{D-1}} \right)
 \frac{1}{\left(2\pi\right)^l}
 \int 
 \frac{P_\Gamma dE_1 \wedge ... \wedge dE_l}{\prod\limits_{e_j \in E_\Gamma} D_j^{\nu_j}}.
 \hspace*{4mm}
\eq
We perform the energy integrations with the help of the residue theorem.
Let us assume that the polynomial $P_\Gamma$ is such that 
all energy integrations over half circles at infinity vanish.
This assumption is always satisfied for scalar integrals where 
$P_\Gamma=1$.
If this assumption is not met, we may enforce it by subtracting local ultraviolet counterterms from the integrand \cite{Becker:2010ng,Becker:2012aq}.

Let ${\mathcal E} \subset {\mathbb C}^l$ be the set of points $E = (E_1,...,E_l)$, where $l$ internal propagators go on-shell
and removing the corresponding edges gives a spanning tree.
We have
\bq
 \left| {\mathcal E} \right|
 & = &
 2^l \left| {\mathcal C}_\Gamma \right|.
\eq
This number is easily obtained from the number of spanning trees 
and the $2^l$ solutions per spanning tree.
For generic values of $p_1, ..., p_n$ and $\vec{k}_1, ..., \vec{k}_l$ the points of ${\mathcal E}$ are distinct.
Points in ${\mathcal E}$ coincide if in addition to the cut propagators 
one or more uncut propagators go on-shell.
We distinguish the cases of a pinch singularity and a non-pinch singularity.
For a non-pinch singularity we may deform the integration contour for the spatial variables $\vec{k}_1$, ..., $\vec{k}_l$
into the complex domain. 
The modified $i \delta$-prescription given in eq.~(\ref{dual_delta_I_prescription}) tells us 
in which direction we should deform. This is exactly the raison d'\^etre for the present article.
For a pinch singularity we have an infrared singularity. This singularity is 
either regulated by dimensional regularisation or cancelled in the combination with real contributions according to the
Kinoshita-Lee-Nauenberg theorem \cite{Kinoshita:1962ur,Lee:1964is}.

Let $\sigma \in {\mathcal C}_\Gamma$ be a set of indices defining a spanning tree.
For each cut edge we choose an orientation 
and we may take the $l$ independent loop momenta to be the loop momenta flowing
through the edges $e_{\sigma_1}, ..., e_{\sigma_l}$ with the chosen orientation.
Let
\bq
 E_\sigma^{(\alpha)} & = & \left(E_{\sigma_1}^{(\alpha)},...,E_{\sigma_l}^{(\alpha)}\right)
\eq
be a solution to
\bq
 D_{\sigma_1}
 \;\; = \;\;
 ...
 \;\; = \;\;
 D_{\sigma_l}
 \;\; = \;\;
 0.
\eq
In total there are $2^l$ solutions $E_\sigma^{(1)}$, ..., $E_\sigma^{(2^l)}$,
given by
\bq
 \left( \pm \sqrt{ \vec{k}_{\sigma_1}^2 + m_{\sigma_1}^2 - i \delta}, ..., \pm \sqrt{ \vec{k}_{\sigma_l}^2 + m_{\sigma_l}^2 - i \delta} \right).
\eq
Let us denote by $n_\sigma^{(\alpha)}$ the number of times the negative root $-\sqrt{...}$ occurs in
$E_\sigma^{(\alpha)}$.
We set
\bq
\label{def_f}
 f & = &
  \frac{P_\Gamma}{\prod\limits_{e_j \in E_\Gamma} D_j^{\nu_j}}.
\eq
We define the local residue \cite{Griffiths:book} at $E_\sigma^{(\alpha)}$ by
\bq
\label{local_residue}
 \mathrm{res}\left(f,E_\sigma^{(\alpha)}\right)
 & = &
 \frac{1}{\left(2\pi i\right)^l}
 \oint\limits_{\gamma_\eps} f dE_1 \wedge ... \wedge dE_l.
\eq
The integration in eq.~(\ref{local_residue}) is around a small $l$-torus
\bq
 \gamma_\eps & = &
 \left\{
   \left( E_1, ..., E_l \right) \in {\mathbb C}^l | \left| D_{\sigma_i} \right| = \eps
 \right\},
\eq
encircling $E_\sigma^{(\alpha)}$ with orientation
\bq
 d \arg D_{\sigma_1} \wedge d \arg D_{\sigma_2} \wedge ... \wedge d \arg D_{\sigma_l} \ge 0.
\eq
We consider the weighted sum of residues
\bq
\label{sum_residues}
 \sum\limits_{\alpha=1}^{2^l}
 \left(-1\right)^{n_\sigma^{(\alpha)}}
 S_{\sigma \alpha} \;
 \mathrm{res}\left(f,E_\sigma^{(\alpha)}\right),
\eq
where $\left(-1\right)^{n_\sigma^{(\alpha)}} S_{\sigma \alpha}$ is a weight factor depending on $\sigma$ and $\alpha$.
Let us make one remark: Eq.(\ref{sum_residues}) is not a global residue for the ideal $\langle D_{\sigma_1}^{\nu_{\sigma_1}}, ..., D_{\sigma_l}^{\nu_{\sigma_l}} \rangle$,
due to the additional factor $(-1)^{n_\sigma^{(\alpha)}} S_{\sigma \alpha}$.
The standard definition of the global residue for $\langle D_{\sigma_1}^{\nu_{\sigma_1}}, ..., D_{\sigma_l}^{\nu_{\sigma_l}} \rangle$ is just the sum over the $2^l$ local residues,
without any weight factors.
This sum vanishes, whereas the sum in eq.~(\ref{sum_residues}) does in general not.
\begin{theorem}
\label{theorem_arbitrary_powers}
With $f$ as in eq.~(\ref{def_f}) we have
\bq
\label{contour_integration}
\lefteqn{
 \frac{1}{\left(2\pi\right)^l}
 \int f dE_1 \wedge ... \wedge dE_l
 = } & &
 \nonumber \\
 & &
 \left(-i\right)^l
 \sum\limits_{\sigma \in {\mathcal C}_\Gamma}
 \sum\limits_{\alpha=1}^{2^l}
 S_{\sigma \alpha} \;
 \left(-1\right)^{n_\sigma^{(\alpha)}}
 \mathrm{res}\left(f,E_\sigma^{(\alpha)}\right),
\eq
where the contour of integration on the left-hand side is along the real axes separating 
the poles at $+\sqrt{...}$ from
the poles at $-\sqrt{...}$. 
\end{theorem}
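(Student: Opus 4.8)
The plan is to evaluate the $l$-fold energy integral on the left-hand side of eq.~(\ref{contour_integration}) by applying the residue theorem iteratively, one energy variable at a time, starting with $E_1$ and proceeding through $E_l$. Since $f$ of eq.~(\ref{def_f}) was assumed to decay fast enough that the arcs at infinity do not contribute, each one-dimensional integration along the real axis may be closed into the lower half-plane, so that $\int dE_i = (-2\pi i)\sum(\text{residues at poles with }\mathrm{Im}\,E_i<0)$. Carrying this out for all $l$ variables turns $\frac{1}{(2\pi)^l}\int f\,dE_1\wedge\dots\wedge dE_l$ into $(-i)^l$ times a sum of iterated residues, which already reproduces the overall prefactor in eq.~(\ref{contour_integration}); the remaining task is to reorganise this sum into the sum over spanning trees and the local residues of eq.~(\ref{local_residue}).

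Next I would identify which poles are encircled. A pole in the $E_i$-plane always originates from some factor $D_j=0$, and the two solutions $E_j=\pm\sqrt{\vec{k}_j^2+m_j^2-i\delta}$ sit on opposite sides of the real axis, so the real contour indeed separates the $+\sqrt{\dots}$ from the $-\sqrt{\dots}$ poles as stated. An $l$-fold iterated residue therefore selects an ordered tuple of propagators that are driven on-shell; after relabelling this corresponds to a set $\sigma=\{\sigma_1,\dots,\sigma_l\}$. The key structural input is that such a term can be non-zero only when the momenta $k_{\sigma_1},\dots,k_{\sigma_l}$ are linearly independent in the loop momenta, for only then do the $l$ divisors $\{D_{\sigma_i}=0\}$ meet in isolated points and the successive energy integrations each produce a genuine pole. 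By the standard matroid/graph-theoretic characterisation, linear independence of these edge momenta is equivalent to $\{e_{\sigma_1},\dots,e_{\sigma_l}\}$ being the complement of a spanning tree, i.e.\ $\sigma\in{\mathcal C}_\Gamma$; any $l$-subset whose removal leaves a loop gives linearly dependent momenta and contributes nothing. This is exactly why one may adopt $E_{\sigma_1},\dots,E_{\sigma_l}$ as integration variables for the tree $\sigma$.

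The final step is to re-express the ordered iterated residues as the symmetric local residues of eq.~(\ref{local_residue}). By the transformation law for residues, the iterated residue at a common zero $E_\sigma^{(\alpha)}$ equals the local residue on the torus $\gamma_\eps$ up to the sign of the Jacobian of $(E_1,\dots,E_l)\mapsto(D_{\sigma_1},\dots,D_{\sigma_l})$ and the orientation convention $d\arg D_{\sigma_1}\wedge\dots\wedge d\arg D_{\sigma_l}\ge0$. Together with the indicator of whether $E_\sigma^{(\alpha)}$ is actually enclosed by the successive lower-half-plane closures — which is zero for solutions lying on the wrong side — this fixes the weight $S_{\sigma\alpha}\in\{0,\pm1\}$. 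The accompanying factor $(-1)^{n_\sigma^{(\alpha)}}$ arises because the residue at a negative-energy root $E=-\sqrt{\dots}$ carries $1/(2E)$ with sign opposite to that of a positive-energy root, so each negative root flips the sign relative to the on-shell normalisation $1/(2\sqrt{\vec k^2+m^2})$ contained in the measure. Collecting these contributions over $\sigma\in{\mathcal C}_\Gamma$ and $\alpha=1,\dots,2^l$ reproduces the weighted sum of eq.~(\ref{sum_residues}) and establishes eq.~(\ref{contour_integration}).

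The main obstacle I foresee is the careful tracking of the $i\delta$-prescription through the iterated integration, which is what ultimately decides the signs. After the residue in $E_1$ is taken, the on-shell value $E_1^\ast$ is substituted into the remaining propagators; since $E_1^\ast$ depends on the still-unintegrated energies, the poles in $E_2,\dots,E_l$ move, and a solution that is a negative-energy root of $D_{\sigma_i}$ in isolation may nevertheless be swept into the lower half-plane by this substitution. Determining precisely which of the $2^l$ roots are enclosed — hence the nonvanishing pattern of $S_{\sigma\alpha}$ — is the delicate part, and is exactly where the modified $i\delta$-prescription of the dual propagators originates. One must also check that no spurious contribution arises when, in addition to the $l$ cut propagators, an uncut propagator approaches its mass shell: for a non-pinch configuration the remaining spatial contour can still be deformed, so the isolated-residue decomposition of eq.~(\ref{local_residue}) stays valid. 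By contrast, the vanishing of the non-spanning-tree subsets and the computation of the Jacobian sign are routine once the contour geometry is settled.
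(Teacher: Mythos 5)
Your skeleton --- iterated one-variable residue integrations, with non-spanning-tree subsets dropping out because their propagator momenta are linearly dependent --- does match the opening of the paper's argument. But the step you yourself flag as ``the delicate part'' (deciding which poles are actually enclosed once the on-shell value from an earlier residue has been substituted into the remaining propagators) is not a side issue to be noted and deferred: it is the entire technical content of the proof, and you leave it unresolved. Infinitesimal $i\delta$'s alone cannot decide on which side of a later contour a shifted pole lies, because the answer depends on the relative magnitudes of several competing infinitesimal imaginary parts. The paper resolves this by temporarily giving the internal masses large, strongly ordered imaginary parts (the final result being independent of this device and recovered by analytic continuation), and by encoding the substitutions through the signature matrix $\Sigma$ relating the cut basis to the integration basis: each residue enters with a coefficient $\prod_i \Delta^{(i)}$ built from inverses of the nested submatrices $\Sigma^{(i)}$ and $\theta$-functions of the ordered mass imaginary parts, cf.\ eq.~(\ref{corrected_result}). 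Your proposal supplies no substitute for this mechanism, so the signs and the nonvanishing pattern you need are never actually determined.

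Second, your conclusion that the weights satisfy $S_{\sigma\alpha}\in\{0,\pm1\}$ is incompatible with the theorem as the paper uses it. A single iterated-residue computation with a fixed choice of variables, order, and closure direction yields only an asymmetric, choice-dependent decomposition --- essentially the paper's intermediate result, eq.~(\ref{corrected_result}); at one loop, for example, it assigns weight $1$ to the positive-energy roots and $0$ to the negative-energy ones. The $S_{\sigma\alpha}$ of the theorem arise only after a further symmetrisation step that you omit entirely: summing over the order $\pi$ in which the cuts are taken and averaging over the choice of integration variables $\tilde{\sigma}$, integration order $\tilde{\pi}$ and winding numbers $\tilde{\alpha}$, with chain-graph weights $1/N^{\mathrm{chain}}(\sigma)$. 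The resulting weights are fractional --- $1/2$ for all one-loop graphs, $1/4$ for two-loop graphs whose chain graph is a product of tadpoles, eq.~(\ref{banana_graphs}) for banana-type chain graphs --- and it is precisely these symmetric values that are needed in theorem~\ref{theorem_unit_powers}. It is worth noting that this is exactly the point on which the first version of the paper itself erred (assuming $S_{\sigma\alpha}=1/2^l$, see the Note), which underlines that this step cannot be waved through as bookkeeping.
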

\begin{proof}
We specify a set of integration variables by $\tilde{\sigma} \in {\mathcal C}_\Gamma$
and an order in which the integrations are performed by $\tilde{\pi} \in S_l$.
We assume that the integration over $k_{\tilde{\sigma}_{\tilde{\pi_1}}}$ is performed first, followed
by the integration over $k_{\tilde{\sigma}_{\tilde{\pi_2}}}$, etc..
In order to keep the indexing to a minimum we introduce the ordered set
$\tilde{k}=(\tilde{k}_1,...,\tilde{k}_l)=(k_{\tilde{\sigma}_{\tilde{\pi_1}}},...,k_{\tilde{\sigma}_{\tilde{\pi_l}}})$.
Let further $\tilde{\alpha}=(\Gamma_1,...,\Gamma_l)$ be the ordered set of winding numbers.
For a cut specified by $\sigma \in {\mathcal C}_\Gamma$ we denote by $\pi \in S_l$ the order in which the cuts
are taken, e.g. the cut of the edge $e_{\sigma_{\pi_1}}$ is taken in the first integration, followed
by the the cut of the edge $e_{\sigma_{\pi_2}}$, etc..
Again, in order to keep the indexing to a minimum we introduce the ordered set
$\hat{k}=(\hat{k}_1,...,\hat{k}_l) = (k_{\sigma_{\pi_1}}, ..., k_{\sigma_{\pi_l}})$.
We denote by $\alpha=(\lambda_1,...,\lambda_l)$ the signs of the energies for the cut under consideration.
$\lambda_j=1$ means that we consider the residue with positive energy with respect to the chosen
orientation of the edge $e_{\sigma_{\pi_j}}$.
$\hat{k}$ and $\tilde{k}$ are both bases of independent loop momenta, hence they are related by
\bq
 \hat{k}_i & = & \sum\limits_{j=1}^l \Sigma_{ij} \tilde{k}_j + q_i,
\eq
with $\Sigma_{ij} \in \{-1,0,1\}$ and $q_i$ depending only on the external momenta.
This defines the $l \times l$-signature matrix $\Sigma$.
We denote by $\Sigma^{(j)}$ the $j \times j$-matrix obtained from $\Sigma$ by deleting the rows and columns
$(j+1), ..., l$.
In order to compute the residues we may temporarily assume that the imaginary parts of all internal masses are large and strongly ordered.
The final result will not depend on this assumption.
After performing the contour integrations we may remove this assumption and analytically continue to any desired (complex) kinematics.
With these specifications one obtains
\bq
\label{corrected_result}
\lefteqn{
 \frac{1}{\left(2\pi i\right)^l}
 \int f dE_1 \wedge ... \wedge dE_l
 = 
} & &
 \\
 & &
 \sum\limits_{\sigma \in {\mathcal C}_\Gamma}
 \sum\limits_{\pi \in S_l}
 \sum\limits_{\alpha \in \{1,-1\}^l}
 C^{\tilde{\sigma} \tilde{\pi} \tilde{\alpha}}_{\sigma \pi \alpha}
 \;
 \mathrm{res}\left(f,E_\sigma^{(\alpha)}\right),
 \nonumber
\eq
where
$C^{\tilde{\sigma} \tilde{\pi} \tilde{\alpha}}_{\sigma \pi \alpha}$ is given by
\bq
 C^{\tilde{\sigma} \tilde{\pi} \tilde{\alpha}}_{\sigma \pi \alpha}
 & = &
 \prod\limits_{i=1}^l \Delta^{(i)}.
\eq
$\Delta^{(i)}$ is zero if $\det \Sigma^{(i)} = 0$.
Otherwise we let $\Pi^{(i)}$ be the inverse matrix of $\Sigma^{(i)}$.
The quantity $\Delta^{(i)}$ is then given by
\bq
 \Delta^{(i)}
 & = &
 \Gamma_i \Pi^{(i)}_{ii}
 \;
 \theta\left( \Gamma_i \mathrm{Im}\left( \sum\limits_{j=1}^i \Pi^{(i)}_{ij} \lambda_j m_{\sigma_{\pi_j}} \right) \right).
\eq
The quantities $\Delta^{(i)}$ are computed with a chosen strong ordering of the imaginary parts of the internal masses.
The quantity $C^{\tilde{\sigma} \tilde{\pi} \tilde{\alpha}}_{\sigma \pi \alpha}$ is independent of this choice.
Eq.~(\ref{corrected_result}) generalises eq.~(2) of \cite{Capatti:2019ypt} to complex external kinematics.

One may now sum over $\pi$ and average over $\tilde{\sigma}$, $\tilde{\alpha}$, $\tilde{\pi}$ 
in a suitable way.
We do this as follows:
We group the internal propagators $D_j$ into chains \cite{Kinoshita:1962ur}.
Two propagators belong to the same chain, if their momenta differ only by a linear combination 
of the external momenta. We denote by $n^{\mathrm{chain}}(j)$ the number of propagators in the chain of $D_j$.
We set
\bq
 N^{\mathrm{chain}}\left(\sigma\right)
 & = &
 \prod\limits_{j=1}^l n^{\mathrm{chain}}\left(\sigma_j\right).
\eq
To each graph $\Gamma$ we associate a new graph $\Gamma^{\mathrm{chain}}$ called the chain graph by deleting all external lines and by choosing one propagator for each chain
as a representative.
We denote by $|{\mathcal C}_{\Gamma^{\mathrm{chain}}}|$ the number of spanning trees of the chain graph.
We then perform a weighted average, where each term is weighted by $1/N^{\mathrm{chain}}(\sigma)$.
We obtain
\bq
\lefteqn{
 \frac{1}{\left(2\pi\right)^l}
 \int f dE_1 \wedge ... \wedge dE_l
 = } & &
 \nonumber \\
 & &
 \left(-i\right)^l
 \sum\limits_{\sigma \in {\mathcal C}_\Gamma}
 \sum\limits_{\alpha=1}^{2^l}
 S_{\sigma \alpha}
 \left(-1\right)^{n_\sigma^{(\alpha)}}
 \mathrm{res}\left(f,E_\sigma^{(\alpha)}\right),
\eq
with
\bq
\lefteqn{
 S_{\sigma \alpha}
 = } & &
 \\ 
 & &
 \frac{\left(-1\right)^{l+n_\sigma^{(\alpha)}}}{2^l l! \left|{\mathcal C}_{\Gamma^{\mathrm{chain}}}\right|}
 \sum\limits_{\pi \in S_l}
 \sum\limits_{\tilde{\sigma} \in {\mathcal C}_\Gamma}
 \sum\limits_{\tilde{\pi} \in S_l}
 \sum\limits_{\tilde{\alpha} \in \{1,-1\}^l}
 \frac{C^{\tilde{\sigma} \tilde{\pi} \tilde{\alpha}}_{\sigma \pi \alpha}}{N^{\mathrm{chain}}\left(\sigma\right)}.
 \nonumber
\eq
This defines the $S_{\sigma \alpha}$.
\end{proof}

The factor $S_{\sigma \alpha}$ equals $1/2$ for all one-loop graphs, 
it equals $1/4$ for all two-loop graphs whose underlying chain graph is a product of two one-loop tadpoles,
while it equals
\bq
\label{banana_graphs}
 \frac{1}{\left(l+1\right)} \frac{1}{\left(\begin{array}{c} l \\ n_\sigma^{(\alpha)} \\ \end{array} \right)}
\eq
for all two-loop graphs whose underlying chain graph is the sunrise graph, 
if the orientation of the cut lines is chosen the same across the cut.
This agrees with \cite{CaronHuot:2010zt}.
This covers all two-loop graphs.
Eq.~(\ref{banana_graphs}) generalises to all higher loop graphs, whose underlying chain graph is a banana graph.

Let us now specialise to the case $\nu_1=...=\nu_N=1$ and let us work out the $i \delta$-prescription for the uncut propagators.
\begin{theorem}
\label{theorem_unit_powers}
Let us assume again that $P_\Gamma$ is a polynomial in the loop momenta such 
that all energy integrations over half cycles at infinity vanish.
If all propagators occur to power one, we have
\bq
\lefteqn{
 \int \left( \prod\limits_{j=1}^l \frac{d^Dk_j}{\left(2\pi\right)^D} \right)
 \frac{P_\Gamma}{\prod\limits_{e_j \in E_\Gamma} \left( k_j^2 - m_j^2 + i \delta \right)}
 = 
 } & &
 \nonumber \\
 & &
 \left(-i\right)^l
 \sum\limits_{\sigma \in {\mathcal C}_\Gamma}
 \int \left( \prod\limits_{j=1}^l \frac{d^{D-1}k_{\sigma_j}}{\left(2\pi\right)^{D-1} 2 \sqrt{\vec{k}_{\sigma_j}^2 + m_{\sigma_j}^2 }} \right)
 S_{\sigma \alpha} 
 \nonumber \\
 & &
 \frac{P_\Gamma}{\prod\limits_{j \notin \sigma } \left( k_j^2 - m_j^2 + i s_j\left(\sigma\right) \delta \right)},
\eq
where $s_j(\sigma)$ is defined as
\bq
\label{dual_delta_I_prescription}
 s_j\left(\sigma\right)
 & = &
 \frac{E_j}{E_\parallel}
\eq
and $E_\parallel$ is defined as follows:
The set $\sigma = \{\sigma_1,...,\sigma_l\} \in {\mathcal C}_\Gamma$ defines 
a tree $T_{\mathrm{cut}}$ obtained from the graph $\Gamma$ by cutting the internal edges 
$C_\sigma = \{e_{\sigma_1},...,e_{\sigma_l}\}$.
Cutting in addition the edge $e_j \in E_\Gamma \backslash C_\sigma$ 
will give a two-forest $(T_1,T_2)$.
We orient the external momenta of $T_1$ such that all momenta are outgoing.
Let $\pi$ be the set of indices corresponding to the external edges of $T_1$ which come from cutting the edges
$C_\sigma$ of the graph $\Gamma$.
The set $\pi$ may contain an index twice, this is the case if both half-edges of a cut edge belong to $T_1$.
Then define $E_\parallel$ by
\bq
 \frac{1}{E_\parallel}
 & = &
 \sum\limits_{a \in \{j\} \cup \pi} \frac{1}{E_a}.
\eq
\end{theorem}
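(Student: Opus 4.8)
The plan is to start from Theorem~\ref{theorem_arbitrary_powers} and to evaluate its right-hand side explicitly in the unit-power case $\nu_j=1$. Since every pole is now simple, the local residue $\mathrm{res}(f,E_\sigma^{(\alpha)})$ is governed by the Jacobian of the map $(E_{\sigma_1},\dots,E_{\sigma_l})\mapsto(D_{\sigma_1},\dots,D_{\sigma_l})$. Because the loop basis is chosen to be the momenta of the cut edges, this Jacobian is diagonal with entries $\partial D_{\sigma_i}/\partial E_{\sigma_i}=2E_{\sigma_i}^{(\alpha)}=2\lambda_i\sqrt{\vec{k}_{\sigma_i}^2+m_{\sigma_i}^2}$, so it contributes a factor $\prod_i 2\lambda_i\sqrt{\vec{k}_{\sigma_i}^2+m_{\sigma_i}^2}$. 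The product of signs $\prod_i\lambda_i$ equals $(-1)^{n_\sigma^{(\alpha)}}$, which cancels against the factor $(-1)^{n_\sigma^{(\alpha)}}$ already present in Theorem~\ref{theorem_arbitrary_powers}, leaving the positive on-shell measure $\prod_i 1/(2\sqrt{\vec{k}_{\sigma_i}^2+m_{\sigma_i}^2})$. Summing over the $2^l$ sign choices $\alpha$ then assembles, together with the $d^{D-1}k_{\sigma_j}$ integrations, exactly the combined forward-plus-backward hyperboloid integral defined in Section~\ref{sec:notation}, and what remains in the integrand is the product of uncut propagators $D_j$, $j\notin\sigma$, evaluated at the on-shell energies.

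The crux is the $i\delta$-prescription inherited by these uncut propagators, and the central step is to track the infinitesimal imaginary part carried by the on-shell solution. Writing $E_{\sigma_i}^{(\alpha)}=\lambda_i\sqrt{\vec{k}_{\sigma_i}^2+m_{\sigma_i}^2-i\delta}$ and expanding to first order in $\delta$ gives $\mathrm{Im}\,E_{\sigma_i}^{(\alpha)}=-\delta/(2E_{\sigma_i}^{(\alpha)})$, where $E_{\sigma_i}^{(\alpha)}$ now denotes the signed real value. Since the loop basis lets me write the energy of any uncut edge as $E_j=\sum_i c_{ji}E_{\sigma_i}+(\text{external})$ with $c_{ji}\in\{-1,0,1\}$, its imaginary part is $\mathrm{Im}\,E_j=-\sum_i c_{ji}\,\delta/(2E_{\sigma_i})$. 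Substituting into $D_j=E_j^2-\vec{k}_j^2-m_j^2+i\delta$ and keeping only the linear term, the effective imaginary part becomes $\mathrm{Im}\,D_j=\delta\,(1-E_j\sum_i c_{ji}/E_{\sigma_i})$, which I read off as $s_j(\sigma)\,\delta$.

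It then remains to recognise $s_j(\sigma)=1-E_j\sum_i c_{ji}/E_{\sigma_i}$ as the geometric quantity $E_j/E_\parallel$ of the statement. For this I would invoke momentum conservation on the two-forest $(T_1,T_2)$ obtained by cutting $e_j$ in addition to $C_\sigma$. Orienting all external legs of $T_1$ outgoing, the vanishing of the total outgoing momentum expresses the outgoing momentum of $e_j$ through those cut momenta whose half-edges land on $T_1$, i.e. the indices $a\in\pi$, together with the genuine external momenta. Reading off the energy component and dividing by the corresponding on-shell energies turns this linear relation into $-\sum_i c_{ji}/E_{\sigma_i}=\sum_{a\in\pi}1/E_a$, so that $s_j=1+E_j\sum_{a\in\pi}1/E_a=E_j(1/E_j+\sum_{a\in\pi}1/E_a)=E_j/E_\parallel$, provided all energies are measured with the same outgoing orientation. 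The case in which both half-edges of one cut edge lie in $T_1$ is exactly the situation in which an index appears twice in $\pi$; there the two half-edges carry opposite outgoing momenta and hence reciprocal energies $+1/E_{\sigma_i}$ and $-1/E_{\sigma_i}$ that cancel, consistent with that edge dropping out of the momentum relation.

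The step I expect to be the main obstacle is the sign bookkeeping in this last identification. One must pin down a single orientation convention --- all external legs of $T_1$ outgoing, with the numerator $E_j$ and all denominator energies measured in that convention --- under which the routing coefficients $c_{ji}$, the branch signs $\lambda_i$, and the outgoing energies $E_a$ combine so that $s_j$ comes out as $E_j/E_\parallel$ with the correct overall sign rather than merely up to a sign. Since the whole point of eq.~(\ref{dual_delta_I_prescription}) is that the sign of $s_j(\sigma)$ fixes the deformation direction, it is precisely this reconciliation of the analytic $i\delta$-tracking with the graph-theoretic two-forest structure, including the doubled-index case, that carries the real content of the theorem.
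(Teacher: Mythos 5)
Your proposal is correct and follows essentially the same route as the paper's own proof: specialise Theorem~\ref{theorem_arbitrary_powers} to simple poles (where the residues produce the on-shell measure factors $1/(2\sqrt{\vec{k}_{\sigma_j}^2+m_{\sigma_j}^2})$ and the signs $(-1)^{n_\sigma^{(\alpha)}}$ cancel against the Jacobian signs), then track the infinitesimal imaginary parts of the on-shell energies by a first-order Taylor expansion of $D_j$ in $\delta$, and finally identify $1+E_j\sum_{a\in\pi}1/E_a = E_j/E_\parallel$ via energy conservation on the two-forest $(T_1,T_2)$. The only cosmetic difference is that you route the cut energies through the loop-basis coefficients $c_{ji}$ and then convert to outgoing energies of $T_1$, whereas the paper writes $E_j$ directly as minus the sum of the outgoing energies of the other external legs of $T_1$, which shortcuts that bookkeeping (and also records the invariance under $T_1\leftrightarrow T_2$ that makes your orientation convention immaterial).
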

\begin{proof}
Theorem~\ref{theorem_unit_powers} is a specialisation of theorem~\ref{theorem_arbitrary_powers} to the case where the integrand has only single poles.
The calculation of the residues yields
\bq
 \mathrm{res}\left(\frac{1}{D_{\sigma_j}},\sqrt{\vec{k}_{\sigma_j}^2+m_{\sigma_j}^2-i\delta}\right)
 & = &
 \frac{1}{2 \sqrt{\vec{k}_{\sigma_j}^2+m_{\sigma_j}^2}},
 \nonumber \\
 - \mathrm{res}\left(\frac{1}{D_{\sigma_j}},-\sqrt{\vec{k}_{\sigma_j}^2+m_{\sigma_j}^2-i\delta}\right)
 & = &
 \frac{1}{2 \sqrt{\vec{k}_{\sigma_j}^2+m_{\sigma_j}^2}},
 \nonumber
\eq
where we neglected on the right-hand side the infinitesimal small imaginary part.
It remains to work out the sign of the imaginary part of the uncut propagators. 
Let us consider $D_j$ and with the notation as above the tree $T_1$. 
The external edges of $T_1$ are given by $e_j$, the set $E_{\mathrm{cut}}=\{e_{\pi_1},e_{\pi_2},...\}$ 
and possibly a subset $E_{\mathrm{ext}}$
of the external edges $\{1,...,n\}$ of the original graph $\Gamma$.
Energy conservation relates $E_j$ to minus the sum of the energies of all other external particles
of the tree $T_1$.
The energies corresponding to the edges from $E_{\mathrm{ext}}$ are real,
the energies corresponding to the edges from $E_{\mathrm{cut}}$ have an infinitesimal small imaginary part.
Taylor expansion to first order in $\delta$ gives
\bq
 E_{\pi_a} & = & \pm \sqrt{\vec{k}_{\pi_a}^2 + m_{\pi_a}^2 - i \delta}
 \\
 & = &
 \pm \sqrt{\vec{k}_{\pi_a}^2 + m_{\pi_a}^2} \mp \frac{1}{2} \frac{i \delta}{\sqrt{\vec{k}_{\pi_a}^2 + m_{\pi_a}^2}} 
 + {\mathcal O}\left(\delta^2\right).
 \nonumber
\eq
By a slight abuse of notation we denote the ${\mathcal O}(\delta^0)$-term again by $E_{\pi_a}$.
Thus, the replacement
\bq
 E_{\pi_a}
 & \rightarrow &
 E_{\pi_a}
 - \frac{i \delta}{2 E_{\pi_a}}
\eq
makes the infinitesimal imaginary part explicit.
Let us now look at $D_j$ and expand to first order in $\delta$:
\bq
 D_j 
 & = & 
 k_j^2 - m_j^2 + i \delta
 \;\; = \;\; 
 E_j^2 - \vec{k}_j^2 - m_j^2 + i \delta
 \\
 & = &
 \left(\sum\limits_{a \in \pi} E_a + \sum\limits_{a \in E_{\mathrm{ext}}} E_a^{\mathrm{ext}} \right)^2 - \vec{k}_j^2 - m_j^2 + i \delta
 \nonumber \\
 & \rightarrow &
 k_j^2 - m_j^2 
 + \left( 1 + E_j \sum\limits_{a \in \pi} \frac{1}{E_a} \right) i \delta 
 + {\mathcal O}\left(\delta^2\right).
 \nonumber
\eq
For the ${\mathcal O}(\delta)$-term we have
\bq
 1 + E_j \sum\limits_{a \in \pi} \frac{1}{E_a} 
 & = &
 E_j \left( \sum\limits_{a \in \{j\} \cup \pi} \frac{1}{E_a} \right)
 \;\; = \;\;
 \frac{E_j}{E_\parallel}.
 \nonumber
\eq
Although we singled out the tree $T_1$ from the two-forest $(T_1,T_2)$ it is easily checked that the definition
of $s_j(\sigma)$ is invariant
under the exchange $T_1 \leftrightarrow T_2$.
\end{proof}
Theorem~\ref{theorem_unit_powers} is the main result of this letter.
It allows us to express a Feynman integral with no raised propagators as a sum of phase space integrals.
Each phase space integral corresponds to a spanning tree of the original graph.
The integrand of each phase space integral corresponds to a cut graph, where exactly $l$ internal propagators
have been cut and the remaining $(N-l)$ internal propagators have a modified $i\delta$-prescription given
by eq.~(\ref{dual_delta_I_prescription}).
Theorem~\ref{theorem_unit_powers} is the specialisation of theorem~\ref{theorem_arbitrary_powers}
to Feynman integrals with no raised propagators. 
For Feynman integrals with raised propagators we may still use theorem~\ref{theorem_arbitrary_powers}.
The only change is that the computation of the residues is more involved.
Residues of Feynman integrands with raised propagators have been considered in \cite{Bierenbaum:2012th}.

Although we defined the modified $i\delta$-prescription in terms of energies in a specific Lorentz frame, 
we may easily formulate it in a Lorentz-covariant way: Let $\eta$ be a Lorentz vector with
$\eta_0 > 0$ and $\eta^2 \ge 0$.
Then $s_j(\sigma)$ is given by
\bq
\label{dual_delta_I_prescription_covariant}
 s_j\left(\sigma\right)
 & = &
 \sum\limits_{a \in \{j\} \cup \pi} \frac{\eta \cdot k_j}{\eta \cdot k_a}.
\eq
Let us look at an example. 
\begin{figure}
\begin{center}
\includegraphics[scale=1.0]{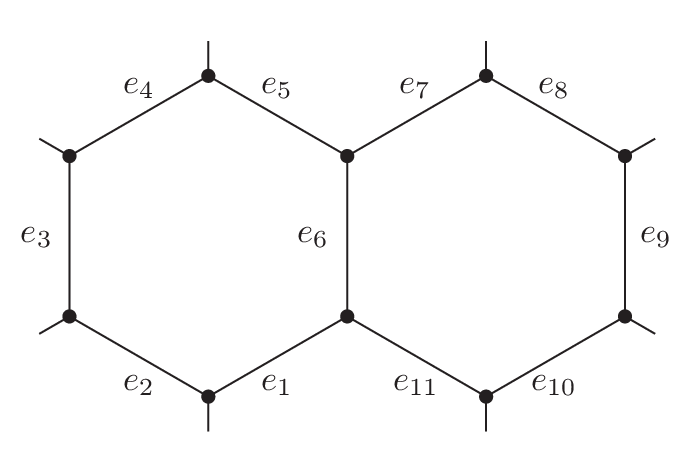}
\end{center}
\caption{
A two-loop eight-point function with $11$ propagators.
}
\label{fig_1}
\end{figure}
Fig.~\ref{fig_1} shows a two-loop eight-point graph $\Gamma$. 
There are $35$ spanning trees, and each spanning tree defines a cut graph.
\begin{figure}
\begin{center}
\includegraphics[scale=1.0]{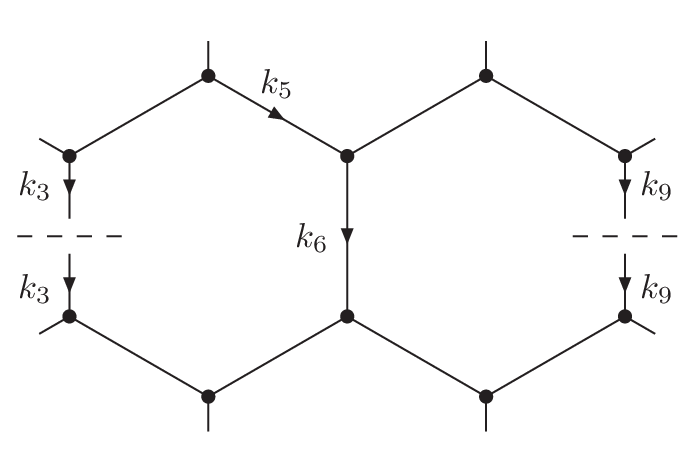}
\end{center}
\caption{
A cut diagram corresponding to $\sigma=\{3,9\}$. We also indicate an orientation for the edges
$e_3$, $e_5$, $e_6$ and $e_9$.
}
\label{fig_2}
\end{figure}
Fig.~\ref{fig_2} shows an example of a cut graph corresponding
to $\sigma=\{3,9\}$. We also indicate in fig.~\ref{fig_2} an orientation for the edges $e_3$, $e_5$, $e_6$ and $e_9$.
As an example we consider $s_5(\sigma)$ and $s_6(\sigma)$. 
The sign of the imaginary part is determined by
\bq
 s_5\left(\sigma\right)
 & = &
 \frac{E_3+E_5}{E_3},
 \nonumber \\
 s_6\left(\sigma\right)
 & = &
 \frac{E_3 E_6+ E_3 E_9 + E_6 E_9}{E_3 E_9}.
\eq
There are ample applications of our result.
Let us give three examples.
First of all, our result is directly geared 
towards numerical methods 
for higher-order computations \cite{Soper:1998ye,Soper:1999xk,Nagy:2003qn,Gong:2008ww,Catani:2008xa,Bierenbaum:2010cy,Bierenbaum:2012th,Buchta:2014dfa,Hernandez-Pinto:2015ysa,Buchta:2015wna,Sborlini:2016gbr,Driencourt-Mangin:2017gop,Driencourt-Mangin:2019aix,Assadsolimani:2009cz,Assadsolimani:2010ka,Becker:2010ng,Becker:2011vg,Becker:2012aq,Becker:2012nk,Becker:2012bi,Goetz:2014lla,Seth:2016hmv,Pittau:2012zd,Pittau:2013qla,Donati:2013voa,Page:2015zca,Gnendiger:2017pys} and paves the way to treat two-loop amplitudes numerically in an efficient and automated way.
Secondly, and in a wider context, it sheds new light on the cancellation of infrared singularities.
Our result allows to discuss the singularity structure of loop integrands in terms of on-shell tree diagrams.
This will be helpful at NNLO and 
beyond \cite{Kosower:2002su,Kosower:2003cz,Weinzierl:2003fx,Weinzierl:2003ra,Gehrmann-DeRidder:2005cm,GehrmannDeRidder:2007jk,Daleo:2009yj,Boughezal:2010mc,Gehrmann:2011wi,Abelof:2011jv,Abelof:2012he,Somogyi:2005xz,Somogyi:2006da,Somogyi:2006db,Aglietti:2008fe,Somogyi:2008fc,Somogyi:2009ri,Bolzoni:2010bt,DelDuca:2016csb,Somogyi:2017bui,Catani:2007vq,Catani:2019iny,Czakon:2010td,Czakon:2011ve,Czakon:2014oma,Gaunt:2015pea,Boughezal:2015dva,Boughezal:2015eha,Magnea:2018hab,Magnea:2018ebr}.
Thirdly and on the formal side, our approach also suggests an extension 
of the concept of scattering forms \cite{Arkani-Hamed:2017tmz,Mizera:2017rqa,delaCruz:2017zqr,Arkani-Hamed:2017mur}
from tree-level towards loops.
This will be explored in a future publication.

\subsection*{Acknowledgements}

This work has been supported by the 
Cluster of Excellence ``Precision Physics, Fundamental Interactions, and Structure of Matter'' 
(PRISMA+ EXC 2118/1) funded by the German Research Foundation (DFG) 
within the German Excellence Strategy (Project ID 39083149).

\subsection*{Note}

In the first version of this article we erroneously assumed that $S_{\sigma \alpha}$ is independent of $\sigma$ and $\alpha$ and given
by $1/2^l$.
This is not correct and has been pointed out in \cite{Capatti:2019ypt}.
In the present version we corrected theorem 1.

\bibliography{/home/stefanw/notes/biblio}
\bibliographystyle{/home/stefanw/latex-style/h-physrev5}

\end{document}